\newcommand{\arxiv}[1]{\href{http://arxiv.org/abs/#1}{\texttt{arXiv:#1}}}
\theoremstyle{plain}
\newtheorem{theorem}{Theorem}
\newtheorem{lemma}[theorem]{Lemma}
\newtheorem{proposition}[theorem]{Proposition}
\newtheorem{fact}[theorem]{Fact}
\theoremstyle{definition}
\theoremstyle{remark}
\renewcommand\appendix{\par
	\setcounter{section}{0}
	\setcounter{subsection}{0}
	\setcounter{figure}{0}
	\setcounter{table}{0}
	\renewcommand\thesection{Appendix \Alph{section}}
	\renewcommand\thefigure{\Alph{section}\arabic{figure}}
	\renewcommand\thetable{\Alph{section}\arabic{table}}
}
\newcommand{\be}{\begin{equation}}
\newcommand{\ee}{\end{equation}}
\newcommand{\bea}{\begin{eqnarray}}
\newcommand{\eea}{\end{eqnarray}}
\newcommand{\bfa}{\begin{fact}}
	\newcommand{\efa}{\end{fact}}
\newcommand{\bin}{\begin{inequality}}
	\newcommand{\ein}{\end{inequality}}
\def\l {{\lambda}}
\title{ Fractional Exclusion Statistics\\
	as an Occupancy Process}
\author{Nour-Eddine Fahssi\\
	\small \it Department of Mathematics, FSTM, Hassan Second University\\[-0.8ex]
	\small \it BP 146, Mohammedia,  Morocco.\\
	\small and \\
	\small \it Lab of High Energy Physics, Modeling and Simulation,\\[-0.8ex]
	\small \it Faculty of Science, Mohammed V University at Agdal, \\[-0.8ex]
	\small \it Rabat, Morocco. \\
	\small \href{mailto:noureddine.fahssi@univh2c.ma}{\tt noureddine.fahssi@univh2c.ma}}
\date{}
\begin{document}
	\maketitle

\begin{abstract}
We show the possibility of describing fractional exclusion statistics (FES) as an occupancy process with global and \textit{local} exclusion constraints. More specifically, using combinatorial identities, we show that FES can be viewed as ``ball-in-box" models with appropriate weighting on the set of occupancy configurations (merely represented by a partition of the total number of particles). As a consequence, the following exact statement of the generalized Pauli principle is derived: for an $N$-particles system exhibiting FES of extended parameter \mbox{$g=q/r$} ($q$ and $r$ are co-prime integers such that $0 < q \leq r$), (1)~the allowed occupation number of a state is less than or equal to $r-q+1$ and \emph{not} to $1/g$ whenever $q\neq 1$ and (2)~the global occupancy shape is admissible if the number of states occupied by at least two particles is less than or equal to $(N-1)/r$ ($N \equiv 1 \mod r$). These counting rules allow distinguishing infinitely many families of FES systems depending on the parameter $g$ and the size $N$. 
\medskip

\noindent PACS numbers: 05.30.Pr, 02.10.Ox\\
\end{abstract}

\section{Introduction}\label{sec:introduction}

Although, the elementary particles seem to be exhausted by the Pauli classification into bosons and Fermions, topological considerations allow us to generalize the standard statistics in 1D (exchange/anyon statistics~\cite{leinas}) and in 2D (fractional/exclusion statistics~\cite{poly}).  In arbitrary dimensions, Haldane~\cite{Hald}, motivated by the properties of quasi-particles in the fractional quantum Hall effect and in one dimensional inverse-square exchange spin chains, introduced the fractional exclusion statistics (FES)~\cite{Hald}. His proposal is to consider systems with a generalized Pauli blocking interpolating between ``no exclusion'' and ``perfect exclusion''. More explicitly, consider $N$ particles on a lattice of dimension $K$. If we fix the positions of $N-1$ particles, then the remaining single particle can occupy $d_N=K-g(N-1)$ positions. The constant $g$ is a ``statistical interaction'' parameter given by $g=-\Delta d / \Delta N$ where $\Delta d$ is the  change in the dimension of a single particle space and $\Delta N$ is the change in the number of particles. The conventional Bose-Einstein (BES) and Fermi-Dirac (FDS) statistics correspond, respectively, to $g = 0$ and $g = 1$. Now, clear confirmations for FES have been found as well as applications in numerous models of interacting particles; see, e.g.,~\cite{ref2,ref3,ref4,ref5,ref6,anghel,anghel2,CS,Hu} and references therein. In these notes, we designate FES with parameter $g$ by FES$_g$, and term particles obeying FES$_g$ $g$-ons. 

In his seminal paper, Haldane postulated that the full Hilbert space for $g$-ons systems has the size:~\cite{Hald, ref3} \be \label{HW} W_{g}(K,N)= {d_{N}+N-1 \choose N},\ee  where the generic binomial coefficient $\binom{a}{b}$ equals $a!/(b!(a-b)!)$ if $0 \leq b \leq a$, and vanishes otherwise. However, no concrete counting procedure is behind the interpolating formula \eqref{HW} as is the case for conventional statistics. Indeed, BES and FDS are examples of ball-in-box models, i.e., models for random allocations of unlabeled balls/particles in labeled boxes/states, subject to global exclusion constraints. The combinatorial weight --the number of micro-states-- coincides here with the number of ways to distribute the balls among the boxes. In this letter, we want to see if~\eqref{HW} is the combinatorial weight of a ball-in-box model.

First, we must have $N\equiv 1 \pmod r$ so that the dimension $d_N$, and accordingly $W_g(K,N)$, is a whole number. Thus, if $N = r P +1$ for some integer $P$ and $g=q/r$ ($q<r$ are coprime), then $d_{N+r}-d_N=-q$, viz. adding $r$ particles reduces the number of available states by $q$. The weight~\eqref{HW} takes now the form \be \label{HWbis} W_g(K,0)=1, \quad \hbox{and}  \quad W_g(K,N) ={K+(r-q)P \choose r P+1}. \tag{$1'$}\ee Note that $W_g(K,N)=0$ if $P > (K-1)/q$. 

Certainly, dealing with FES using standard techniques of statistical mechanics, allows to describe the thermodynamic properties of $g$-ons~\cite{ref3,ref4,Nayak}, but leads to some inconsistencies when we attempt to find the probabilities for various occupation numbers of a single state~\cite{Nayak}. For instance, Polychronakos~\cite{Poly} proposed a \textit{multiplicative} model which accurately gives back the statistical mechanics of FES in the thermodynamic limit. Multiplicativity here means that, at least for large $K$, the grand partition function is the $K$-th power of a $K$-independent function~\cite{Poly}. However, the price paid for this microscopic realization is the occurrence of negative probabilities. Now, we understand that this problem occurs because Haldane statistics is not multiplicative and, unlike the Pauli principle, the exclusion operates on more than one level. Chaturvedi and Srinivasan~\cite{chat}, and subsequently Murthy and Shankar~\cite{neg}, showed with a remarkable tour de force how negative weights may be avoided for $g=1/2$ (semions) and for $g=1/3$. The more general case of FES$_{1/m}$ was worked out subsequently and explicitly realized in models in one dimension, see~\cite[Chapters 3 \& 4]{ref6}. 

In this letter, we revisit and solve this problem in a closed form when the parameter $g$ is generally any irreducible fraction: $g=q/r$, where $q$ and $r$ are coprime and $0 < q \leq r$.  Our approach is purely combinatorial. It consists of regarding FES as a ball-in-box model, i.e. as an ideal quantum gas, with appropriate weighting on the set of occupancy configurations. To do so, we write the dimension $W_{g}(K,N)$ in the generic form~\eqref{Wbis} below. While calculating the corresponding weights, we deduce the following exclusion rules: An occupancy configuration is allowed if (1) the maximal number of particles that each state can accommodate is $r-q+1$, and not to $g^{-1}$ whenever $q\neq 1$, and (2) the configurations in which the number of states with two or more particles is greater than $(N-1)/r$ are forbidden. This will allow us to distinguish infinitely many families of FES$_g$ systems depending on $g$ and $N$.


\section{Multiplicative vs. non-multiplicative statistics}\label{s2}
Let us first set the notation for our combinatorial analysis. The terminology comes mainly from the theory of integer partitions. 

A \emph{partition} of a non-negative integer $N$ is a non-increasing sequence of positive integers whose sum is $N$. To indicate that $\l$ is a partition of $N$, we write $\l \vdash N$ and denote \mbox{$\l=(1^{k_1} 2^{k_2} \ldots N^{k_N})$}, where $\sum_{i=1}^N i k_i = N$ and $k_i$ designates the multiplicity of the part $i$; the sum $\ell(\l)=\sum_{i=1}^N k_i$ is called the \emph{length} of $\l$. The \emph{Ferrers diagram} of $\l$ is a pattern of dots, with the $j$th row having the same number of dots as the $j$th term in $\l$.

Suppose we have $N$ ideal particles to be randomly distributed into $K$ states. In this work, this quantum system will be referred to as a ball-in-box model (indistinguishable balls and labeled boxes). An \emph{occupancy configuration} of the system is said to be of shape $\l=(1^{k_1} 2^{k_2} \ldots N^{k_N}) \vdash N$ if $k_i$ states are occupied by $i$ particles ($i=1,\ldots, N$) and the number of non-vacant states $\ell(\l)$ is less than or equal to $K$. Moreover, if no parts of $\l$ exceed a fixed integer $m$, the corresponding configuration is additionally characterized by $\ell(\l^*) \leq m$, where $\l^*$ stands for the \emph{conjugate} partition  of $\l$, that is, the partition whose Ferrers diagram is obtained from $\l$ by reflection with respect to the diagonal so that rows become columns and columns become rows. For instance, the fermionic configuration is of shape $(1^N)$, characterized by $\ell(\l^*) \leq 1$.

\subsection{A basic formula}
We begin with a simple observation. 
\begin{proposition} \label{prop1} For bosons and fermions, the combinatorial weight can be uniquely written as weighted sums over partitions: \be \label{Wbis} W(K,N)= \sum_{{\l=(1^{k_1} 2^{k_2} \ldots N^{k_N})\; \vdash N}}w(\l) \frac{\ell(\l)!}{k_1! \, k_2! \cdots k_N!} \binom{K}{\ell(\l)}, \ee  where the function $w$ is given by \[w(\l)= \left\{
	\begin{array}{ll}
	1, & \hbox{\emph{for bosons};} \\
	\delta_{\l,(1^N)}, & \hbox{\emph{for fermions}.}
	\end{array}
	\right.\] ($\delta$ is the Kronecker symbol.)     
\end{proposition}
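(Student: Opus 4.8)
The plan is to reduce the whole statement to one elementary counting lemma. For $N$ indistinguishable balls in $K$ labelled boxes, let $C(\lambda;K)$ be the number of occupancy configurations of a prescribed shape $\lambda=(1^{k_1}2^{k_2}\cdots N^{k_N})\vdash N$. First I would establish
\[ C(\lambda;K)=\binom{K}{\ell(\lambda)}\,\frac{\ell(\lambda)!}{k_1!\,k_2!\cdots k_N!}=\frac{K!}{(K-\ell(\lambda))!\,k_1!\,k_2!\cdots k_N!}. \]
The argument is a two-step choice: (i) select which $\ell(\lambda)$ boxes are to be non-vacant, in $\binom{K}{\ell(\lambda)}$ ways; (ii) decide how the part-sizes of $\lambda$ are distributed among those boxes, i.e. build a length-$\ell(\lambda)$ word over $\{1,\dots,N\}$ having exactly $k_i$ letters equal to $i$, which is the multinomial count $\ell(\lambda)!/(k_1!\cdots k_N!)$. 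The convention $\binom{a}{b}=0$ for $b>a$ then automatically encodes the admissibility constraint $\ell(\lambda)\le K$, so no case distinction on $K$ is needed.

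Granting the lemma, the two cases fall out at once. For bosons no occupancy is forbidden, so every microstate of the $N$-balls-in-$K$-boxes model carries a unique shape $\lambda\vdash N$ and every such shape with $\ell(\lambda)\le K$ is realised; summing $C(\lambda;K)$ over all $\lambda\vdash N$ therefore enumerates all microstates, which is the stars-and-bars number $\binom{K+N-1}{N}$, the bosonic weight. Equivalently one checks the partition identity $\sum_{\lambda\vdash N}\binom{K}{\ell(\lambda)}\ell(\lambda)!/\prod_i k_i!=\binom{K+N-1}{N}$ by comparing coefficients of $x^N$ on the two sides of $(1-x)^{-K}=\prod_{\text{boxes}}(1+x+x^2+\cdots)$. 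Hence $w\equiv 1$ realises \eqref{Wbis}. For fermions the Pauli blocking forbids any box from holding two or more balls, so the only admissible shape is $\lambda=(1^N)$, contributing $C((1^N);K)=\binom{K}{N}N!/N!=\binom{K}{N}$, the fermionic weight; thus $w(\lambda)=\delta_{\lambda,(1^N)}$ realises \eqref{Wbis}.

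The delicate point — and the step I would treat most carefully — is the word \emph{uniquely}. It is \emph{not} a linear-algebra statement: as functions of $K$, the coefficients $\binom{K}{\ell(\lambda)}\ell(\lambda)!/\prod_i k_i!$ attached to two partitions of the same length are proportional, so the algebraic span of the right-hand side of \eqref{Wbis} does not single out $w$. Uniqueness must instead be read within the ball-in-box interpretation: in any such model $W(K,N)$ is a count of microstates, each microstate has a single shape, and therefore the contribution of shape $\lambda$ to $W(K,N)$ is determined; dividing it by $C(\lambda;K)$ — which is nonzero as soon as $K\ge\ell(\lambda)$, and independent of $K$ for a bona fide combinatorial model — forces the value of $w(\lambda)$. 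For bosons this is $1$ for every $\lambda\vdash N$ (each shape allowed, each configuration counted once); for fermions it is the indicator of $(1^N)$. I expect no computational obstacle here; the only thing requiring attention is stating the uniqueness in this interpretive form rather than as independence of the $C(\,\cdot\,;K)$.
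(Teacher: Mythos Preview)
Your argument is correct. The paper's own proof is purely via generating functions: it writes the bosonic grand partition function as $(1-z)^{-K}=(\sum_{i\ge0}z^i)^K$, expands the $K$th power by the multinomial formula for power series, and reads off the coefficient of $z^N$; the fermionic case is handled the same way from $(1+z)^K$. You instead begin with the bijective lemma $C(\lambda;K)=\binom{K}{\ell(\lambda)}\ell(\lambda)!/\prod_i k_i!$ and obtain the boson and fermion weights by summing (resp.\ restricting) over shapes, mentioning the generating-function identity only as a cross-check. The two routes are equivalent in content; yours has the advantage that the interpretation of the summands in~\eqref{Wbis} is built in from the start rather than supplied afterwards, while the paper's approach generalises more mechanically to other multiplicative models. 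Your discussion of ``uniquely'' is a genuine addition: the paper does not justify uniqueness at all, and you are right that it cannot be read as linear independence of the $C(\lambda;K)$ (partitions of equal length give proportional terms), only as the statement that in a ball-in-box model each microstate has a well-defined shape, which pins down $w(\lambda)$.
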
  
\begin{proof} We shall apply the generating function method dear to combinatorialists~\cite{wilf}. For bosons, it is well-known that the grand partition function (or the generating function of $W(K,N)$ with respect to $N$) is \[\sum_{N=0}^{\infty} \binom{K+N-1}{N} z^N= (1-z)^{-K}= \left(\sum_{i=0}^\infty z^i\right)^K.\] where $z$ is the fugacity. On the other hand, by the expansion of power series raised to integral powers~\cite[page 823]{stegun}:
	\be \left(\sum_{i=0}^{\infty}a_{i}z^{i} \right)^{K} =
	\sum_{n=0}^{\infty} \left(\sum \frac{K!}{k_{0}! k_{1}! \cdots
		k_{n}!} a_{0}^{k_{0}}a_{1}^{k_{1}} \cdots a_{n}^{k_{n}}\right)
	z^n ,\ee where the inner sum is over the set $\{k_{i} |
	\sum_{i=0}^{n} k_{i}=K \, \mbox{and} \sum_{i=1}^{n}i k_{i}=n \}$,
we write the partition function as	
	\[\sum_{N=0}^{\infty} \binom{K+N-1}{N} z^N = \sum_{N=0}^{\infty}  \left(\sum_{\{k_i\}} \frac{K!}{(K-k_1- \ldots -k_N)! \, k_1! \cdots k_N!}\right) z^N,\] where the inner sum runs over all $N$-tuples $(k_1, \ldots , k_N)$ subject to $k_1+2k_2 + \ldots + N k_N =N $, i.e., over all partitions of $N$. Equating the coefficients on both sides, the bosonic combinatorial weight takes the form \[\sum_{\{k_i\}}\binom{K}{k_1+ \ldots + k_N} \frac{(k_1+ \ldots + k_N)!}{k_1! k_2! \cdots k_N!}=\sum_{{\l\; \vdash N}} \frac{\ell(\l)!}{k_1! \, k_2! \cdots k_N!} \binom{K}{\ell(\l)}.\] Thus, for bosons, $w(\l)=1$. As for fermions, since the partition function is $(1+z)^K$, we find similarly that $w(\l)=1$ if $k_1=N$ and $w(\l)=0$ otherwise.  	
\end{proof}
The expression~\eqref{Wbis} is generic for multiplicative models where the grand partition function is the $K$-th power of an analytic function: $\sum_i a_n z^n$ with positive and $K$-independent Taylor coefficients ($a_0=1$). Reproducing the proof of Proposition~\ref{prop1} yields the weight $w(\lambda) = \prod_{n=1}^N a_n^{k_n}$~\cite{fahssi}. A particularly well-studied example of multiplicative models is the intermediate Gentile statistics of order $G \geq 1$~\cite{gent} for which $a_n=1$ if $n \leq G$ and $a_n=0$ otherwise. So, the one-configuration weight reads here
\be \label{wG} w_G(\l)= \left\{
\begin{array}{ll}
1, & \hbox{if} \quad \ell(\l^*) \leq G; \\
0, & \hbox{otherwise.}
\end{array}
\right.  \ee 

The summands of Eq.~\eqref{Wbis} have the following interpretation: for a fixed configuration of shape $\l$, the factor $\frac{\ell(\l)!}{k_1! \, k_2! \cdots k_N!} \binom{K}{\ell(\l)}$  is the number of ways to choose $\ell(\l)$ non-vacant states out of $K$ ones and arrange $k_i$ states with $i$ particles among them; $i=1, \ldots, N$. The result is then weighted by a non-negative function $w(\l)$. Of course, a configuration $\l$ with $w(\l)=0$ does not contribute to the counting and therefore is not permitted. Thus, the function $w(\lambda)$ encodes the counting rules of the statistics : for bosons, all the configurations contribute equally in the counting, while, for fermions, the Kronecker's delta $\delta_{\l,(1^N)}$ is a manifestation of the Pauli principle.

The FES is not multiplicative in the sense of the definition above. For instance, it is easy to verify that the grand partition function for semions is given by \be \sum_{n\geq0}W_{1/2}(K,N)z^N=\frac{2}{\sqrt{z^2+4}}\left(\frac{z}{2}+\sqrt{1+\frac{z^2}{4}}\right)^{2K}.\ee
As discussed in the introduction, Polychronakos, through a slight modification of $W_g(K,N)$~\cite[Eq.5]{Poly} which leads to the same statistical mechanics, proposed a microscopic realization of FES based on multiplicativity. It can be shown that the Polychronakos modified grand partition function is effectively the $K$-th power of \be {1 \over 4}\left(z+\sqrt{4+z^2}\right)^2= 1+z+\frac{z^2}{2}+\frac{z^3}{8}-\frac{z^5}{128}+\frac{z^7}{1024}-\frac{5
z^9}{32768}+\cdots \ee The Taylor coefficients of this function are not always positive. Therefore, their interpretation as probabilities is problematic~\cite{Nayak}.

Interestingly, the dimension~\eqref{HWbis} can be cast in the generic form~\eqref{Wbis} with a well-defined and positive weighting function $w(\l)$. It is the latter that must be perceived as probability of (global) occupation. More precisely, we show

\begin{theorem} \label{HWweight} For $g=q/r$ and $N \equiv 1 \pmod{r}$ , the number of micro-states \emph{\eqref{HWbis}} can uniquely be written in the form~\eqref{Wbis}, where
 \be  w_g(\l) = \label{HWw} \binom{(N-1)/r}{\ell(\l)-k_1} \binom{\ell(\l)}{k_1}^{\!\!-1}  \,\prod_{j=0}^{r-q} \binom{r-q}{j}^{k_{j+1}}. \ee In particular, $w_g(\l)=0$ if $\ell(\l^*) > r-q+1$. 
\end{theorem}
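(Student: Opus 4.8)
The plan is to work directly from the closed form \eqref{HWbis}. Write $N=rP+1$ with $P:=(N-1)/r$ and set $a:=r-q$, so that $0\le a\le r-1$, $\gcd(a,q)=1$, and $W_g(K,N)=\binom{K+aP}{rP+1}$. Vandermonde's convolution gives
\[
W_g(K,N)\;=\;\binom{K+aP}{rP+1}\;=\;\sum_{\ell\ge0}\binom{K}{\ell}\binom{aP}{\,rP+1-\ell\,},
\]
which exhibits $W_g(K,N)$ as an explicit combination of the binomials $\binom{K}{\ell}$; since these are linearly independent over $\mathbb{Q}$, the coefficient of each $\binom{K}{\ell}$ in any representation of the shape \eqref{Wbis} is pinned down, and this is the source of the uniqueness. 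The theorem therefore reduces to the single identity
\[
\binom{aP}{\,rP+1-\ell\,}\;=\;\sum_{\substack{\lambda=(1^{k_1}2^{k_2}\cdots N^{k_N})\,\vdash\,N\\ \ell(\lambda)=\ell}}w_g(\lambda)\,\frac{\ell!}{k_1!\,k_2!\cdots k_N!},\qquad \ell\ge0.
\]
The ``in particular'' clause is essentially built into \eqref{HWw}: reading the product over all $j\ge0$ with the convention (already adopted for \eqref{HW}) that $\binom{a}{b}=0$ for $b\notin\{0,\dots,a\}$, a partition with a part larger than $r-q+1$ picks up a vanishing factor, so $w_g$ kills it and the outer sum may be restricted to partitions with largest part $\le r-q+1$, i.e. with $\ell(\lambda^*)\le r-q+1$; only $k_1,\dots,k_{r-q+1}$ then occur.

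Next I would simplify a single summand. On the support of $w_g$ a partition is the datum $(k_1,\dots,k_{a+1})$ with $\sum_i k_i=\ell$ and $\sum_i i k_i=N$; put $s:=\ell-k_1=\sum_{j=1}^{a}k_{j+1}$, so that eliminating $k_1$ yields the companion relation $\sum_{j=1}^{a} j\,k_{j+1}=N-\ell=rP+1-\ell$. Using $\binom{\ell}{k_1}^{-1}=k_1!\,s!/\ell!$, the factors $\ell!$ and $k_1!$ cancel and
\[
w_g(\lambda)\,\frac{\ell!}{k_1!\cdots k_N!}\;=\;\binom{P}{s}\,\frac{s!}{k_2!\cdots k_{a+1}!}\,\prod_{j=1}^{a}\binom{a}{j}^{k_{j+1}}.
\]
Summing over $(k_2,\dots,k_{a+1})$ subject to $\sum k_{j+1}=s$ and $\sum j\,k_{j+1}=rP+1-\ell$ is a coefficient extraction by the multinomial theorem:
\[
\sum_{\substack{\sum k_{j+1}=s\\ \sum j k_{j+1}=rP+1-\ell}}\frac{s!}{\prod_j k_{j+1}!}\prod_{j=1}^{a}\binom{a}{j}^{k_{j+1}}\;=\;[x^{\,rP+1-\ell}]\Bigl(\sum_{j=1}^{a}\binom{a}{j}x^{j}\Bigr)^{\!\!s}\;=\;[x^{\,rP+1-\ell}]\bigl((1+x)^{a}-1\bigr)^{s}.
\]
Finally, $\sum_{s\ge0}\binom{P}{s}y^{s}=(1+y)^{P}$ evaluated at $y=(1+x)^{a}-1$ collapses the $s$-sum to $[x^{\,rP+1-\ell}](1+x)^{aP}=\binom{aP}{rP+1-\ell}$, which is exactly the Vandermonde coefficient.

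The step that needs genuine care — and which I expect to be the main obstacle — is the mismatch between the ranges of $s$: in the partition sum $s$ runs only up to $\ell$ (because $k_1=\ell-s\ge0$), whereas the generating-function step sums $s$ over all of $\mathbb{Z}_{\ge0}$. To reconcile them, observe that $\bigl((1+x)^{a}-1\bigr)^{s}$ has $x$-order exactly $s$ and degree $as$, so a nonzero contribution to $[x^{\,rP+1-\ell}]$ forces $s\le rP+1-\ell\le as$, while $\binom{P}{s}\ne0$ forces $s\le P$; combining $rP+1-\ell\le as\le aP=(r-q)P$ gives $\ell\ge qP+1\ge P+1>P\ge s$ (using $q\ge1$), so every term that actually contributes already satisfies $s\le\ell$ and the two sums agree termwise. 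This also makes the boundary behaviour transparent: for $P=0$ only $(1)$ survives and both sides equal $\binom{K}{1}$; for $q=r=1$ (fermions) only $(1^N)$ survives, recovering $W_g=\binom{K}{N}$; and the vanishing $W_g(K,N)=0$ when $P>(K-1)/q$ matches $\binom{aP}{rP+1-\ell}=0$ for every $\ell\le K$.
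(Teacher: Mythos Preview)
Your proof is correct and follows essentially the same route as the paper: both arguments rest on Vandermonde's convolution together with the multinomial expansion of $(1+x)^{(r-q)P}$, and your simplification of the summand (cancelling $\ell!$ and $k_1!$ via $\binom{\ell}{k_1}^{-1}$) is exactly the simplification the paper performs before invoking its Lemma. The only organizational differences are that you apply Vandermonde first and work towards the partition sum (the paper does the reverse), and that you decompose $(1+x)^{aP}=\bigl(1+((1+x)^a-1)\bigr)^P$ via an intermediate sum over $s=\ell-k_1$, whereas the paper expands $(1+x)^{aP}=\bigl(\sum_i\binom{a}{i}x^i\bigr)^P$ directly by the multinomial theorem (its Lemma) and indexes the outer sum by $s=N-\ell$. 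Your version is in fact slightly more complete: you justify uniqueness via the linear independence of the $\binom{K}{\ell}$, and you explicitly verify the range compatibility $s\le\ell$ (equivalently $k_1\ge 0$), a point the paper passes over silently.
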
 \noindent We postpone the proof of our main result to Sub-section.~\ref{proof}.
 
In the case with $g=1/2$, the weight~\eqref{HWw} reads
\[w_{1/2}(\l)=\binom{(N-1)/2}{k_2}\binom{k_1+k_2}{k_1}^{-1}, \qquad \l=(1^{k_1}2^{k_2}) \vdash N\] which is exactly the formula derived by Chaturvedi and Srinivasan in their microscopic interpretation of semion statistics~\cite{chat}. For $g=1/3$,   \[w_{1/3}(\l)= \binom{(N-1)/3}{k_2+k_3}\binom{k_1+k_2+k_3}{k_1}^{-1} 2^{k_2}, \qquad \l=(1^{k_1}2^{k_2}3^{k_3}) \vdash N, \] formula obtained by Murthy and Shankar using a different approach~\cite{neg}.

From the expression~\eqref{HWw}, we underline the following features:   \begin{enumerate}
	\item[(1)] the weights $w_g (\l)$ are fractional and non-negative definite,
	\item[(2)] the weights $w_g(\l)$ depend only upon $P \coloneqq (N-1)/r$ and the difference $r-q$,
	\item[(3)] the allowed occupation number for a single-state does not exceed $r-q+1$ and \emph{not} $1/g$ whenever $q\ne 1$. We recall, however, that the average occupation number is less than $1/g \leq r-q+1$~\cite{ref3},
	\item[(4)] Since the binomial coefficient $\binom{P}{\ell(\l)-k_1}$ in~\eqref{HWw} vanishes if $P < \sum_{i=2}^m k_i$ , the corresponding configuration does not contribute to the counting.
\end{enumerate} The last observation is crucial. It stipulates that a necessary condition for permissible configurations is that the number of states occupied by two particles or more is less than or equal to $P$. In other words, the Ferrers diagram of the partition $(2^{k_{2}} \ldots (r-q+1)^{k_{r-q+1}})$, extracted from $\l$, must fit inside the rectangle $[P \times (r-q+1)]$.

Let us incorporate the above-formulated rules as follows: 
\medskip

\noindent \textbf{Generalized Exclusion Principle}.
	\emph{A configuration of shape $ \l \vdash n$ is admissible if and only if the following constraints are fulfilled:}
\begin{enumerate}[leftmargin=5cm]
	\item[$C_1$\; \emph{:}] \quad $\ell(\l) \leq K$ \; \quad (\emph{by definition}),
	\item[$C_2$\; \emph{:}] \quad  $\ell(\l^*) \leq r-q+1$  \quad (\emph{at most $r-q+1$ particles per state}),
	\item[$C_3$\; \emph{:}]  \quad $\displaystyle \sum_{i=2}^m k_i \leq \frac{N-1}{r} \leq  \frac{K-1}{q}$ \quad ($w_g(\l) \neq 0$ \emph{and} $d_N \geq 1$). \end{enumerate}

Therefore, the exclusion operates not only on the ``microscopic'' level (condition $C_2$), but also on the ``macroscopic'' level (condition $C_3$). To illustrate, we implement this in two specific examples. First, let, say, $g=1/3$ and $N=10$. Here the maximal allowed occupancy of a state is $3$. By the constraint $C_2$, 14 configurations may contribute (depending on $K \geq 4$), among which the configurations $(1^2 2^4)$, $(2^5)$, $(1 2^3 3)$ and $(2^2 3^2)$  are forbidden by the constraint $C_3$:
\begin{center}\qquad \quad  \includegraphics[width=6cm]{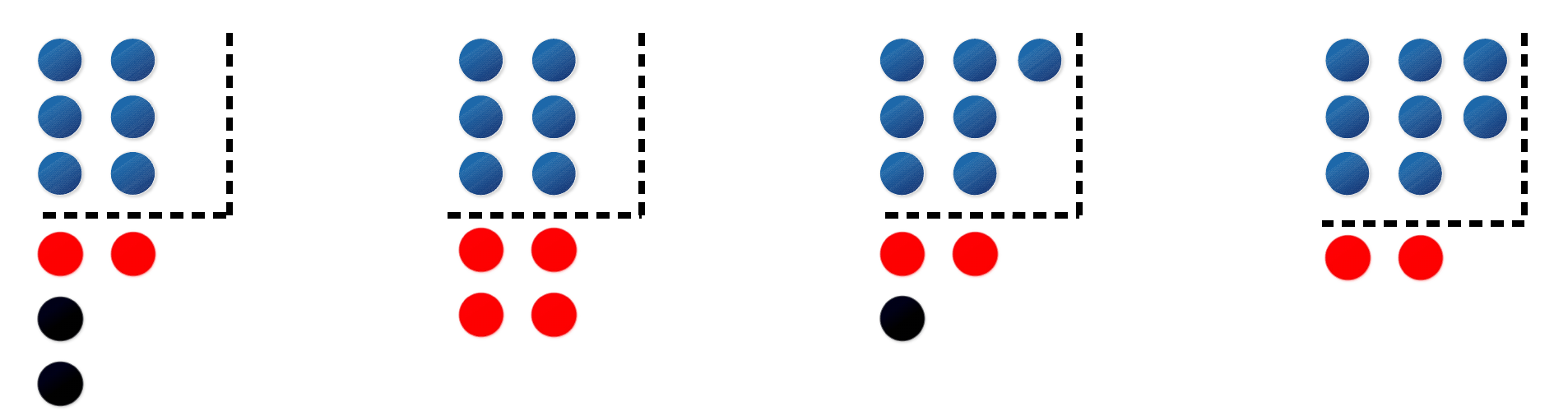}\end{center}
As a second example, take $g=3/5$ and $N=16$. Here the maximal allowed occupancy is again $3$. Among the 231 partitions of 16, only 10 may contribute to the total weight: 
\begin{center}
{\footnotesize	\begin{tabular}
			{c||c|c|c|c|c|c|c|c|c|c}
			$\lambda$ & $(1^{16})$ & $(1^{14} 2)$ & $(1^{12} 2^2)$ & $(1^{10} 2^3)$ & $(1^{13} 3)$ & $(1^{11} 2\,3)$ & $(1^9 2^2 3)$ & $(1^{10} 3^2)$ & $(1^7 3^3)$&$(1^8 2\,3^2)$\\
			\hline
			$w(\lambda)$ & $1$& $2/5$ & $12/91$ & $4/143$ & $3/14$ & $1/13$ & $1/55$ & $1/22$ & $1/120$& $2/165$  \\
	\end{tabular},}
\end{center}
each of which contributes only if its length is less than or equal to $K \geq 10$. We check readily that summing the contributions of all allowed configurations yields $\binom{K+6}{16}=W_{3/5}(k,16)$.

It is worth noting that, in view of the constraints $C_2$ and $C_3$, we can distinguish infinitely many families of $g$-ons systems according to $P=(N-1)/r$ and the difference $ r-q $. Indeed, representing an $N$-particle system fulfilling FES$_g$ by the pair $(N,g=q/r)$, two systems $(N,g=q/r)$ and $(N',g')$ are subject to the same exclusion rules if there exist an integer $j >0$ not a multiple of $r-q$ such that
\be   g'=\frac{j}{r-q+j}, \quad \hbox{and} \quad
\frac{N'-1}{r-q+j}=\frac{N-1}{r}.\ee
The semions, for example, belong to the family with $g=j/(j+1)$, the \emph{semionic family}. Clearly, the Bose and Fermi statistics are recovered in the limits $j=0$ and $j \to \infty$ respectively.

\begin{proposition} \label{prop3}
	For $N \leq K$, the number of permissible configurations is \be \label{conf} \binom{(N-1)/r +r-q}{r-q}. \ee
\end{proposition}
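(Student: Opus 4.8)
The plan is to match permissible configurations with a transparent combinatorial family and then count that family directly.

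First I would note that the hypothesis $N\le K$ renders constraint $C_1$ vacuous, since every partition of $N$ has at most $N\le K$ parts; it also makes the global inequality $\frac{N-1}{r}\le\frac{K-1}{q}$ appearing in $C_3$ automatic (because $q\le r$ and $N\le K$ give $\frac{K-1}{q}\ge\frac{K-1}{r}\ge\frac{N-1}{r}$). Hence a shape $\l=(1^{k_1}2^{k_2}\cdots)\vdash N$ is permissible exactly when its largest part is at most $m\coloneqq r-q+1$ (this is $C_2$) and $k_2+\cdots+k_m\le P$ (this is $C_3$), where $P\coloneqq(N-1)/r$.

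The second step is to peel off the parts equal to $1$: I would write $\l$ as $k_1$ ones together with the sub-partition $\mu\coloneqq(2^{k_2}\cdots m^{k_m})$ formed by the parts that are $\ge 2$. Since $\l$ is recovered from $\mu$ via $k_1=N-|\mu|$, permissible configurations are in bijection with the partitions $\mu$ that can actually occur, and I would show these are precisely the partitions all of whose parts lie in $\{2,\ldots,m\}$ and which have at most $P$ parts in total. One inclusion is immediate from $C_2$ and $C_3$; for the reverse the only point to verify is that $k_1=N-|\mu|\ge 0$, and this is the single estimate that really carries the argument:
\[
|\mu|\;\le\;m\,\ell(\mu)\;\le\;(r-q+1)\,P\;\le\;rP\;=\;N-1\;<\;N .
\]
Here the inequality $r-q+1\le r$ (i.e.\ $q\ge 1$) is exactly what does the work; everything else is bookkeeping, so I would point to this line as the (mild) crux. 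Incidentally it also shows $k_1\ge 1$ always, consistent with $N\equiv 1\pmod r$.

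The last step counts the partitions with parts in $\{2,\ldots,r-q+1\}$ and at most $P$ parts. I would subtract $1$ from every part of $\mu$ and then append zero parts until the length is exactly $P$; this sets up a bijection with weakly decreasing length-$P$ sequences over $\{0,1,\ldots,r-q\}$, equivalently with $P$-element multisets drawn from the $(r-q+1)$-element set $\{0,1,\ldots,r-q\}$. The multiset count then gives
\[
\binom{P+(r-q+1)-1}{P}=\binom{P+r-q}{r-q}=\binom{(N-1)/r+r-q}{r-q},
\]
which is the asserted number. (As a consistency check, $g=1/3,\ N=10$ yields $\binom{5}{2}=10$ and $g=3/5,\ N=16$ yields $\binom{5}{2}=10$, matching the two worked examples above.) A variant of this last step is to pass to conjugate partitions and recognize the same family as the Young diagrams fitting inside a $P\times(r-q)$ box, whose number $\binom{P+r-q}{r-q}$ is classical; either route completes the proof.
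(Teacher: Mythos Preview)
Your proof is correct and follows essentially the same route as the paper: after discarding $C_1$ and the right-hand inequality in $C_3$, both arguments reduce the count to the number of nonnegative tuples $(k_2,\ldots,k_{r-q+1})$ with $\sum k_i\le P$, which the paper evaluates by quoting the standard formula $\binom{P+(r-q)}{r-q}$ while you use an equivalent multiset/box bijection. Your treatment is in fact more careful on one point the paper leaves implicit, namely the verification that every such tuple yields a bona fide partition of $N$ (i.e.\ that $k_1=N-|\mu|\ge 0$), which you handle via $|\mu|\le (r-q+1)P\le rP=N-1$.
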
 
\begin{proof} Clearly, when $N \leq K$ the condition $C_1$ and the inequality in the right of the constraint $C_3$ are satisfied. Thus, a configuration $\l$ is likely if and only if the inequality in the left of the condition $C_3$ holds true. Therefore, the number of allowed configurations is the number of solutions of $k_2 + k_3 + \cdots + k_m \leq (N-1)/r$ in nonnegative integers. The result follows from the known fact that the number of solutions of $x_1 + x_2 + \cdots + x_k \leq p$ is given by $\binom{p+k}{k}$ (cf.~\cite[p.103]{vanLint}). \end{proof}
By way of comparison, the exact exclusion rules for the Gentile statistics are, in addition to $C_1$, $\ell(\l^*) \leq G$ and $N \leq G K$. Thus, the number of permitted configurations is simply that of the partitions of $N$ with no more than $K$ parts; no part exceeding $G$. This number is the coefficient of $q^N$ in the Gaussian polynomial $\left[ \!{K+G \atop K} \!\right]_q$~\cite[Chap.3]{Andr}. When $N\leq K$, this reduces to the number of partitions with largest part not exceeding $G$. We also emphasize that if $G=r-q+1$, then $W_G(K,N)$  majorizes $W_g(K,N)$ since the exclusion principle of FES is more restrictive.

\subsubsection{Proof of Theorem~\ref{HWweight}\label{s4}}
\label{proof}	
To prove Theorem~\ref{HWweight}, we need the following identity:
\begin{lemma} Let $P$, $n$ and $k$ be positive integers. Then
	\be \label{HF} \binom{k P}{n}= \sum_{\{l_i\}} \frac{P!}{l_1! \cdots l_k!(P-l_1- \cdots -l_k)!}
	\prod_{i=1}^k \binom{k}{i}^{l_i}, \ee where the sum runs over all $k$-tuples $(l_1, \ldots , l_k)$ subject to the constraint $l_1+2l_2 + \ldots +k l_k =n $. \end{lemma}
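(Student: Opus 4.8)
The plan is to mimic, almost verbatim, the generating-function argument used in the proof of Proposition~\ref{prop1}. The starting point is the trivial factorization $(1+z)^{kP} = \bigl((1+z)^k\bigr)^{P}$ together with the finite expansion $(1+z)^k = \sum_{i=0}^{k}\binom{k}{i}z^i$. Setting $a_0 = 1$ and $a_i = \binom{k}{i}$ for $1\le i\le k$, the right-hand side becomes $\bigl(\sum_{i=0}^{k} a_i z^i\bigr)^{P}$, to which the formula for a power series raised to an integral power --- the very identity from~\cite[p.~823]{stegun} already invoked in Section~\ref{s2} --- applies directly.

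Concretely, that formula yields
\[
\Bigl(\sum_{i=0}^{k} a_i z^i\Bigr)^{P} \;=\; \sum_{n\ge 0}\;\Bigl(\sum \frac{P!}{l_0!\,l_1!\cdots l_k!}\, a_0^{l_0}a_1^{l_1}\cdots a_k^{l_k}\Bigr)\, z^n ,
\]
where the inner sum runs over all $(k+1)$-tuples $(l_0,l_1,\ldots,l_k)$ with $\sum_{i=0}^{k} l_i = P$ and $\sum_{i=1}^{k} i\,l_i = n$. Since $a_0 = 1$, the factor $a_0^{l_0}$ drops out and $l_0$ is pinned to $l_0 = P - l_1 - \cdots - l_k$; substituting $a_i=\binom{k}{i}$ then converts the bracketed coefficient of $z^n$ into precisely the right-hand side of~\eqref{HF}, the sum now being over $k$-tuples $(l_1,\ldots,l_k)$ constrained only by $l_1 + 2l_2 + \cdots + k l_k = n$. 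Tuples with $l_1+\cdots+l_k > P$ contribute nothing, because the factorial of the negative integer $P-l_1-\cdots-l_k$ is understood to make the multinomial coefficient vanish, consistent with the binomial convention fixed in the Introduction.

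Finally, by the ordinary binomial theorem the coefficient of $z^n$ in $(1+z)^{kP}$ is $\binom{kP}{n}$; equating the two expressions for this coefficient gives~\eqref{HF}. I do not expect a genuine obstacle here. The only points needing a word of care are the bookkeeping of the auxiliary index $l_0$ (trivially eliminated using $a_0=1$) and the boundary behaviour --- both sides vanish when $n>kP$, and degenerate correctly when some $l_i$ would be forced negative --- all of which are absorbed by the standard conventions already in force.
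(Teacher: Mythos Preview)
Your proposal is correct and follows essentially the same route as the paper: factor $(1+z)^{kP}=\bigl((1+z)^k\bigr)^P$, expand the inner bracket by the binomial theorem, raise to the $P$th power via the multinomial/power-series formula, eliminate $l_0$, and equate coefficients of $z^n$ with those from the direct binomial expansion. The only cosmetic difference is that the paper names the multinomial theorem explicitly while you invoke the equivalent Abramowitz--Stegun identity already used in Proposition~\ref{prop1}; your remarks on the vanishing of terms with $l_1+\cdots+l_k>P$ and on the case $n>kP$ are sound and slightly more careful than the paper's version.
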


\begin{proof} We shall again use the technique of generating function. Let $z$ be an indeterminate. On one hand, we have by application of the binomial theorem  \be \label{ex1} (1+z)^{kP}=\sum_{n=0}^{kP}\binom{kP}{n} z^n,\ee and, on the other hand,  by the well-known multinomial theorem:  \bea \nonumber (1+z)^{kP}&=&\left((1+z)^k\right)^P = \left(\sum_{i=0}^k \binom{k}{i}z^i\right)^P = \sum_{l_0 + l_1 + \cdots l_k=P} \frac{P!}{l_0! \, l_1! \, \cdots l_k!}\prod_{i=0}^k \left( \binom{k}{i}z^i \right)^{l_i} \\
	\label{ex2} &=& \sum_{(l_1 , \ldots , l_k)} \left(\frac{P!}{ l_1! \, \cdots l_k! \, (P-l_1-l_2-\cdots l_k)! }\prod_{i=0}^k \binom{k}{i}^{l_i}\right) \; z^{l_1+2l_2+\cdots +k l_k}. \eea   The identity~\eqref{HF} follows by equating the coefficients of $z^n$ in the two expansions~\eqref{ex1} and~\eqref{ex2}.\end{proof}  \begin{proof}[Proof of Theorem~\ref{HWweight}]  Inserting the weight $w_g(\l)$, the RHS of \eqref{Wbis} can be displayed as  \[ \sum_{{\l \vdash N \atop \ell(\l^*) \leq r-q+1}} \left(
	\frac{P!}{k_2! \cdots k_{r-q+1}!(P-k_2-\cdots-k_{r-q+1})!} \prod_{i=1}^{r-q}
	{{r-q}\choose i}^{k_{i+1}} \right) {{K}\choose \ell(\l)}, \] where $P=(N-1)/r$. Taking into account that $\ell(\l)=\sum_{i=1}^{r-q}k_{i}=N-\sum_{i=1}^{r-q}ik_{i+1}$
	and putting $s=\sum_{i=1}^{r-q}ik_{i+1}$ (the integer $s$ ranges
	from 0 to $(r-q)P$ since $k_{r-q+1} \leq P$), we
	re-express the last formula as a double sum:
	\be \label{A2} \sum_{s=0}^{(r-q)P} \left( \sum_{\sum_{i=1}^{r-q}ik_{i+1}=s}\frac{P!}{k_2! \cdots k_{r-q+1}!(P-k_2-\cdots-k_{r-q+1})!} \prod_{i=1}^{r-q} {{r-q}\choose i}^{k_{i+1}}\right) {{K}\choose N-s}.\ee  Now we make the change of summation indices $l_i=k_{i+1}$ to write the inner sum as the RHS of formula~\eqref{HF}:
	\be \label{A3} \sum_{\sum_{i=1}^{r-q}i l_{i}=s}\frac{P!}{l_1! \cdots
		l_{r-q}!(P-l_1-\cdots-l_{r-q})!} \prod_{i=1}^{r-q} {{r-q}\choose i}^{l_{i}} = {{(r-q)P}\choose s}.
	\ee
	We deduce finally that the RHS of Eq.~\eqref{Wbis} reads \be \label{Vander} \sum_{s=0}^{(r-q)P}{{(r-q)P}\choose s}{{K}\choose N-s}={{K+(r-q)P}\choose N}=W_g(K,N), \ee where, to obtain the last equality, we employed the well-known Vandermonde's formula for binomial coefficients~\cite{vander}.  \end{proof} 

We stress, finally, that for $g>1$ ($r<q$), one may follow the proof above to check that $W_g(K,N)$ can as well be formally written in the form~\eqref{Wbis}, but the constraint of maximal occupancy became relaxed and the weights \emph{inevitably negative} for some configurations. Indeed, in this case, the weights are not positive definite since $\binom{r-q}{i}<0$ for odd $i$.

\end{document}